\def\GreedyFuture{\mathsc{GreedyFuture}\xspace}
\def\Patrascu{P\u{a}tra\c{s}cu\xspace}
\def\OPT{\mathsc{OPT}\xspace}
\def\GreedyASS{\mathsc{GreedyASS}\xspace}
\let\path p
\begin{document}

\title{Upper Bounds for Maximally Greedy Binary Search Trees}

\author{Kyle Fox}

\institute{Department of Computer Science, University of Illinois,
  Urbana-Champaign \email{kylefox2@illinois.edu}}

\maketitle

\begin{abstract}
  At SODA 2009, Demaine et~al. presented a novel connection between binary
  search trees (BSTs) and subsets of points on the plane. This connection
  was independently discovered by Derryberry et al.
  As part of their
  results, Demaine et~al.
  considered \GreedyFuture, an offline BST algorithm that
  greedily rearranges the search path to minimize the cost of future searches.
  They showed that \GreedyFuture is actually an online algorithm in their
  geometric view, and that there is a way to turn \GreedyFuture into an online
  BST algorithm with
  only a constant factor increase in total search cost. Demaine et~al.
  conjectured this algorithm was dynamically optimal, but no upper bounds
  were given in their paper.
  We prove the first non-trivial
  upper bounds for the cost of search operations using
  \GreedyFuture including giving an access lemma similar to that found in
  Sleator and Tarjan's classic paper on splay trees.
\end{abstract}

\section{Introduction}

The \EMPH{dynamic optimality conjecture} states that 
given a sequence of successful searches on
an $n$-node binary
search tree, the number of nodes accessed by splay trees is at
most a constant times the number of node accesses and rotations performed
by the optimal algorithm
\emph{for that sequence}. Sleator and Tarjan gave this
conjecture in their paper on splay trees in which they showed~$O(\log n)$
amortized performance as well as several other upper bounds~\cite{st-sbst-85}.
Proving the dynamic optimality conjecture
seems very difficult. There is no known polynomial time algorithm for
finding an optimal BST in the offline setting where we know all searches
in advance,\footnote{In fact, the exact optimization
problem becomes NP-hard if we must access an
arbitrary number of specified nodes during each search~\cite{dhikp-gbst-09}.}
and this conjecture states
that splaying is a simple solution to the online problem.

Until recently, there has been little progress made directly related to this
conjecture. Wilber gave two lower bounds on the number of accesses needed
for any given search sequence~\cite{w-lbabs-89}. There are a handful
of online BST algorithms that are $O(\log \log n)$-competitive
\cite{dhip-doa-07,cds-lcbst-06,c-mt-06,bddf-lcbst-10}, but no
upper bound is known
for the competitiveness of splay trees except the trivial~$O(\log n)$.

\subsection{A Geometric View}

Recently, Demaine et~al. introduced a new way of conceptualizing BSTs
using geometry~\cite{dhikp-gbst-09}. A variant of
this model was independently discovered
by Derryberry et al.~\cite{dsw-lbfbs-05}.
In the geometric view, BST node
accesses are represented as points~$(x,y)$ where~$x$ denotes the rank of the
accessed node and~$y$ represents which search accessed the node. A pair of
points~$a$ and~$b$ in point set~$P$
are called \emph{arborally satisfied} if they lie on the
same horizontal or vertical line, or if the
closed rectangle with corners~$a$ and~$b$
contains another point from~$P$. The family of arborally satisfied
point sets corresponds exactly to BST accesses when rotations upon accessed
nodes are allowed~\cite{dhikp-gbst-09}.

By starting with a point set~$X$ that represents the points a BST must access
to complete searches in a given search sequence~$S$,
we can describe an optimal BST algorithm for~$S$ as
a minimum superset of~$X$ that is arborally satisfied~\cite{dhikp-gbst-09}.
This correspondence between BSTs and arborally satisfied supersets
allows us to focus on algorithms strictly in the geometric
view. Additionally, it is possible to show lower bounds for the BST model
by showing the same for the geometric model. Demaine et~al. take advantage
of this fact to show a class of lower bounds that supersede the lower bounds
of Wilber~\cite{w-lbabs-89,dhikp-gbst-09}.
Further, it is possible to describe an
\emph{online} version of the arborally satisfied superset problem and transform
solutions to this problem into online BST algorithms with at most a constant
factor increase in cost~\cite{dhikp-gbst-09}.

\subsection{Being Greedy}

Demaine et~al. also consider an offline BST algorithm they call \GreedyFuture,
originally proposed by Lucas~\cite{l-cfcbs-88} and
Munro~\cite{m-ocls-00}. \GreedyFuture only touches nodes on the
search path, and then rearranges the search path in order to greedily
minimize the time for upcoming searches.

The worst-case example known for the competitiveness
of \GreedyFuture is a complete
binary search tree with searches performed in bit-reversal order upon the
leaves~\cite{m-ocls-00}. \GreedyFuture has an amortized cost of~$\lg n$
per search
on this sequence. The optimal algorithm rotates the leaves closer to the
root and obtains an amortized cost of~$\lg \frac{n}{2} + o(1)$.
Given a search sequence
of length~$m$, let~$\OPT$ be the total cost of the optimal
algorithm for that sequence.
Demaine et al. conjecture that \GreedyFuture is $O(1)$-competitive.
In fact, the bit-reversal example
suggests that the cost of \GreedyFuture is at most $\OPT + m$;
it appears optimal within an \emph{additive term}.

Surprisingly, Demaine et~al. showed that \GreedyFuture, an offline algorithm
that uses very strong knowledge about the future, is actually an online
algorithm in the geometric model~\cite{dhikp-gbst-09}.
Recall that online algorithms in the geometric model correspond to 
online algorithms in the BST model with essentially the same cost.
If \GreedyFuture is actually an offline
dynamically optimal BST algorithm as it appears to
be, then there exists an \emph{online}
dynamically optimal BST algorithm.

\subsection{Our Contributions}

Despite the apparent optimality of the \GreedyFuture algorithm, nothing was
known about its amortized behavior when Demaine et~al. wrote their report.
We provide the first theoretical evidence that \GreedyFuture is an optimal
algorithm in the following forms:
\begin{compactitem}
\item
  An access lemma similar to that used by Sleator and Tarjan
  for splay trees~\cite{st-sbst-85}.
  This lemma implies several upper bounds including~$O(\log n)$ amortized
  performance.
\item
  A sequential access theorem that states \GreedyFuture takes linear time
  to access all nodes in order starting from any arbitrary BST.
\end{compactitem}

We heavily use the geometric model of Demaine et~al. to prove
the access lemma while focusing directly on BSTs to prove the sequential
access theorem. It is our hope that these results will create further interest
in studying \GreedyFuture as its structural properties seem well suited
for further theoretical analysis (the proof of the sequential access theorem
takes only a page). Additionally, the proof of the access lemma may provide
additional insight into other algorithms running in the geometric model.

\subsection{A Note on Independent Work}

John Iacono and Mihai \Patrascu have discovered a similar access lemma to that
given here using different proof techniques from those shown below.
The author learned about their work
via personal correspondence with them and Erik
Demaine well into performing the research contained in this report.
Their results have never been published.

Additionally, the author became aware of work by Goyal and
Gupta~\cite{gg-dobst-11} after initially writing this report.
They show \GreedyFuture has~$O(\log n)$ amortized performance.
This result appears in our paper as Corollary~\ref{cor:balance}.
As in our proof, they use the geometric model, but they do not use a potential
function as we do to prove a more general access lemma.

\section{Arboral and Geometric Models of BSTs}
\subsection{The Arboral Model}

We will consider the same BST model used by
Demaine et al.~\cite{dhikp-gbst-09}. We consider
only successful searches and not insertions or deletions. Let~$n$
and~$m$ be the number of elements in the search tree and the number of searches
respectively. We assume the elements have distinct keys in~$\Set{1,\dots,n}$.

Given a BST~$T_1$, a subtree~$\tau$ of~$T_1$ containing the
root, and a tree~$\tau'$ on the same nodes as~$\tau$, we say~$T_1$ can be
\EMPH{reconfigured}
by an operation $\tau\to\tau'$ to another BST~$T_2$ if~$T_2$ is
identical to~$T_1$ except for~$\tau$ being replaced by~$\tau'$. The cost of
the reconfiguration is $|\tau|=|\tau'|$.

Given a search sequence $S = \langle s_1,s_2,\dots,s_m\rangle$, we say
a BST algorithm \EMPH{executes}~$S$ by an execution
$E = \langle T_0,\tau_1\to\tau'_1,\dots,\tau_m\to\tau'_m\rangle$ if all
reconfigurations are performed on
subtrees containing the root, and $s_i \in \tau_i$ for all~$i$.
For~$i = 1,2,\dots, m$, define~$T_i$ to be~$T_{i-1}$ with the reconfiguration
$\tau_i\to\tau'_i$. The cost of execution~$E$ is $\sum_{i=1}^m|\tau_i|$.

As explained by Demaine et al.~\cite{dhikp-gbst-09},
this model is constant-factor equivalent
to other reasonable BST models such as those by Wilber and
Lucas~\cite{w-lbabs-89,l-cfcbs-88}.

\subsection{The Geometric Model}

We now turn our focus to the geometric model as given by Demaine
et al. \cite{dhikp-gbst-09}.
Define a \EMPH{point}~$p$ to be a point in 2D with
integer coordinates~$(p.x,p.y)$ such
that $1\leq p.x\leq n$ and $1 \leq p.y \leq m$. Let~$\square ab$ denote the
closed axis-aligned rectangle with corners~$a$ and~$b$.

A pair of points~$(a,b)$ (or their induced rectangle~$\square ab$) is
\EMPH{arborally satisfied}
with respect to a point set~$P$ if (1)~$a$ and~$b$ are
orthogonally collinear (horizontally or vertically aligned), or (2) there is
at least one point from ~$P \setminus \Set{a,b}$ in~$\square ab$. A point
set~$P$ is arborally satisfied if all pairs of points in~$P$ are arborally
satisfied with respect to~$P$. See Fig.~\ref{fig:satdef-unsat} and
Fig.~\ref{fig:satdef-sat}.

As explained in~\cite{dhikp-gbst-09}, there is a one-to-one correspondence
between BST executions and arborally satisfied sets of points.
Let the \EMPH{geometric view} of a BST execution~$E$ be the point set
${P(E) = \Set{(x,y) | x \in \tau_y}}$.
The point set~$P(E)$ for any BST execution~$E$ is arborally
satisfied~\cite{dhikp-gbst-09}.
Further,
for any arborally satisfied point set~$X$, there exists a BST execution~$E$
with $P(E)=X$~\cite{dhikp-gbst-09}.

Let the \EMPH{geometric view} of an access sequence~$S$ be the set of points
$P(S)=\Set{(s_1,1),(s_2,2),\dots,(s_m,m)}$. The above facts suggest that
finding an optimal BST algorithm for~$S$ is equivalent to finding a
minimum cardinality arborally satisfied superset of~$S$.
Due to this equivalence with BSTs, we will refer to values
in~$\Set{1,\dots,n}$ as \EMPH{elements}.

Naturally, we may want to use
the geometric model to find dynamically optimal \emph{online} BST algorithms.
The \EMPH{online arborally satisfied superset}
(online ASS) problem is to design an
algorithm that receives a sequence of points
$\langle(s_1,1),(s_2,2),\dots,(s_m,m)\rangle$
incrementally. After receiving the~$i$th point~$(s_i,i)$,
the algorithm must output a
set~$P_i$ of points on the line~$y=i$ such that\\
${\Set{(s_1,1),(s_2,2),\dots,(s_i,i)}\cup P_1\cup P_2\cup\cdots\cup P_i}$ is
arborally satisfied. The cost of the algorithm is~$m+\sum_{i=1}^m|P_i|$.

We say an online ASS algorithm performs a \EMPH{search} at time~$i$ when
it outputs the set~$P_i$. Further,
we say an online ASS algorithm \EMPH{accesses}~$x$ at time~$i$ if~$(x,i)$
is included in the input set of points or in~$P_i$.
The (non-amortized) cost of a search at time~$i$ is $|P_i| + 1$.

Unfortunately, the algorithm used to create a BST execution from an
arborally satisfied point set requires
knowledge about points above the line~$y=i$ to construct~$T_i$
\cite{dhikp-gbst-09}. We are not able
to go directly from a solution to the online ASS
problem to a solution for the online BST problem with
exactly the same cost. However,
this transformation is possible if we allow the cost of the BST algorithm
to be at most a constant multiple of the ASS
algorithm's cost~\cite{dhikp-gbst-09}.

\begin{figure}[t]
  \centering
  \begin{minipage}[t]{0.45\linewidth}
    \centering
    \includegraphics{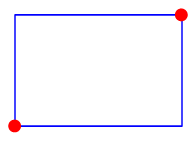}
    \caption{An unsatisfied pair of points. The closed axis-aligned rectangle
      with corners defined by the pair is shown.}
    \label{fig:satdef-unsat}
  \end{minipage}
  \hspace{0.5cm}
  \begin{minipage}[t]{0.45\linewidth}
    \centering
    \includegraphics{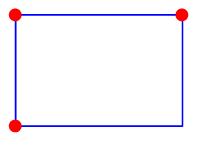}
    \caption{An arborally satisfied superset of the same pair of points}
    \label{fig:satdef-sat}
  \end{minipage}
\end{figure}

\section{\GreedyFuture}

We now turn our focus to describing the \GreedyFuture algorithm in more detail.
Let $S=\langle s_1,\dots,s_m\rangle$ be an arbitrary search sequence of
length~$m$. After every search, \GreedyFuture will rearrange the search path
to minimize the cost of future searches.

More precisely, consider the~$i$th search for the given sequence~$S$. If~$i=m$,
then \GreedyFuture does not rearrange the search path. Otherwise, if~$s_{i+1}$
lies on the search path~$\tau_i$, \GreedyFuture makes~$s_{i+1}$
the root of~$\tau'_i$. If~$s_{i+1}$
does not lie along the search path, then \GreedyFuture makes the
predecessor and successor of~$s_{i+1}$ \emph{within}~$\tau_i$ the root and
root's right child of~$\tau'_i$ (if the successor (predecessor) does not exist,
then \GreedyFuture makes the predecessor (successor) the root and does not
assign a right (left) child within~$\tau'_i$.) Now that it has fixed one or two 
nodes~$x_{\ell}$ and~$x_r$ with $x_{\ell}<x_r$,
\GreedyFuture recursively sets the remaining nodes of~$\tau_i$ \emph{less}
than~$x_{\ell}$ using the subsequence
of $\langle s_{i+1},\dots,s_m\rangle$ containing
nodes less than~$x_{\ell}$. It
then sets the nodes of~$\tau_i$ greater than~$x_r$ using the subsequence
of $\langle s_{i+1},\dots,s_m\rangle$ containing
nodes greater than~$x_r$.

Taking a cue from Demaine et~al., we will call the
online geometric model of the algorithm
\GreedyASS. Let $X = P(S)$ for some BST access sequence~$S$. At each time~$i$,
\GreedyASS simply outputs the minimal set of points at~$y=i$ needed to
satisfy~$X$ up to $y\leq i$.

We note that the set of points needed to satisfy~$X$ up to~$y\leq i$ is uniquely
defined. For each unsatisfied rectangle formed with~$(s_i,i)$ in one corner,
we add the other corner at $y=i$.
We can also define \GreedyASS as an algorithm that sweeps right and left
from the search node, accessing nodes that have increasingly greater last
access times. See Fig.~\ref{fig:greedyass}.

\GreedyASS, the online geometric view of \GreedyFuture, greatly reduces the
complexity of predicting \GreedyFuture's behavior. By focusing our attention
on this geometric algorithm, we proceed to prove several upper bounds
on both algorithms' performance in the following section.

\begin{figure}[t]
  \centering
  \begin{minipage}[t]{0.45\linewidth}
    \centering
    \includegraphics{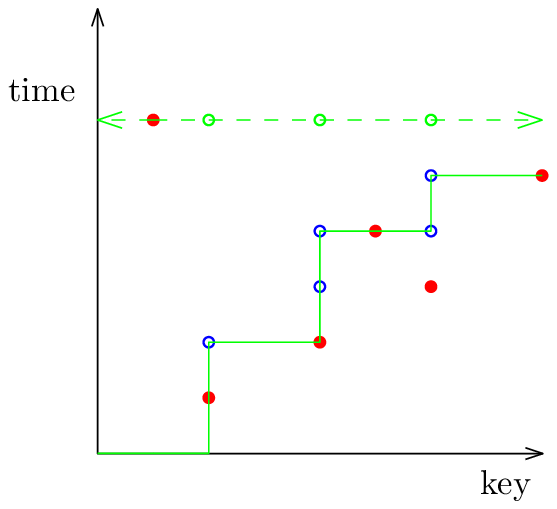}
  \end{minipage}
  \hspace{0.5cm}
  \begin{minipage}[t]{0.45\linewidth}
    \centering
    \includegraphics{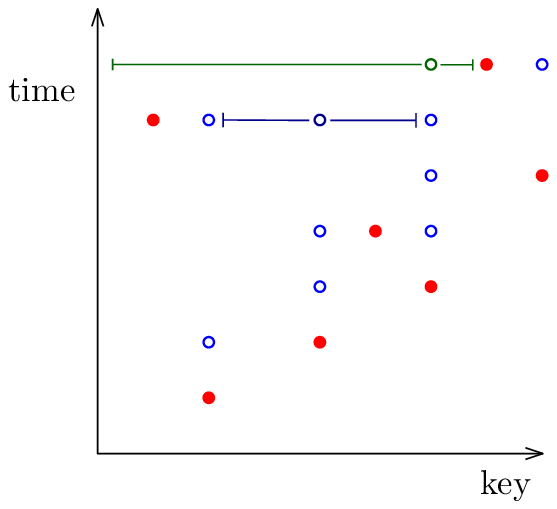}
  \end{minipage}
  \caption{(Left) A sample execution of \GreedyASS.
    Search elements are represented as solid disks.
    For the latest search, \GreedyASS sweeps right,
    placing points when the greatest last access time seen increases.
    The staircase represents these increasing last access times.}
  \label{fig:greedyass}
  \caption{(Right) Later in the same execution of \GreedyASS.
  The most recent neighborhoods for
  two of the elements are represented as line segments surrounding
  those elements. Observe that adding
  another search for anything within a neighborhood will result in
  accessing the corresponding element for that neighborhood.}
  \label{fig:neighborhoods}
\end{figure}

\section{An Access Lemma and its Corollaries}
\label{sec:access}

In their paper on splay trees, Sleator and Tarjan prove the
\emph{access lemma}, a very general expression detailing the amortized cost
of a splay (and therefore search) operation~\cite{st-sbst-85}. They
use this lemma to prove several upper bounds, including the
entropy bound, the static finger bound, and the working set bound.
Wang et~al. prove a similar lemma for their multi-splay tree data structure
to show $O(\log \log n)$-competitiveness and $O(\log n)$ amortized performance,
and the version
of the lemma given in Wang's Ph.D. thesis is used to prove
the other distribution sensitive upper bounds listed
above~\cite{cds-lcbst-06,c-mt-06}.
In this section, we provide a similar lemma for
\GreedyASS and discuss its consequences.

\subsection{Potentials and Neighborhoods}

Fix a BST access sequence~$S$ and let~$X = P(S)$. We consider the execution of
\GreedyASS on~$X$.
Let~$\rho(x,i)$ be the last access of~$x$ at or before time~$i$.
Formally,~$\rho(x,i)$ is the~$y$ coordinate of the highest point on the closed
ray from~$(x,i)$ to~$(x,-\infty)$.

Let~$a$ be the greatest positive
integer smaller than~$x$ such that ${\rho(a,i) \geq \rho(x,i)}$
(or let~$a=0$ if no such integer exists).
The \EMPH{left neighborhood} of~$x$ at time~$i$ is
$\Set{a+1,a+2,\dots,x-1}$
and denoted $\Gamma_{\ell}(x,i)$.
The \EMPH{right neighborhood} of~$x$
at time~$i$ is defined
similarly and denoted~$\Gamma_r(x,i)$. Finally, the
\EMPH{inclusive neighborhood} of~$x$ at time~$i$ is
$\Gamma(x,i)=\Gamma_{\ell}(x,i)\cup\Gamma_r(x,i)\cup\Set{x}$.

The inclusive neighborhood of~$x$ at time~$i$
contains precisely those keys whose appearance as~$s_{i+1}$
would prompt \GreedyASS to access~$x$ at time~$i+1$. Intuitively,
the inclusive neighborhood is similar to a node's subtree in the arboral model.
See Fig.~\ref{fig:neighborhoods}.

Assign to each element $x\in\Set{1,\dots,n}$ a positive real
weight~$w(x)$.
The \EMPH{size} of~$x$ at time~$i$ is
$\sigma(x,i)=\sum_{e\in\Gamma(x,i)}w(e)$.
The \EMPH{rank} of~$x$ at time~$i$ is~${r(x,i)=\Floor{\lg \sigma(x,i)}}$.
Finally, define a
potential function ${\Phi(i) = \sum_{x\in[n]} r(x,i)}$ and let
the \EMPH{amortized cost} of a search at time~$i$
be ${1 + |P_i| + \Phi(i)-\Phi(i-1)}$.

\begin{lemma}[Access Lemma]
  \label{lem:access}
  Let~$W=\sum_{x\in[n]}w(x)$.
  The amortized cost of a search at time~$i$ is at most
  ${5+6\Floor{\lg W} - 6r(s_i,i-1)}$.
\end{lemma}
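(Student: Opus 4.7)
The plan is to follow the Sleator--Tarjan access lemma template, adapted to the geometric sweep structure of \GreedyASS. The amortized cost $1+|P_i|+\Phi(i)-\Phi(i-1)$ will be bounded by controlling (a) the contribution of elements not accessed at time $i$ to $\Phi(i)-\Phi(i-1)$, and (b) the contribution of accessed elements plus the actual cost $|P_i|$.

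Step (a) is immediate. If $y$ is not accessed at time $i$, then $\rho(y,i)=\rho(y,i-1)$, while every newly accessed element now has $\rho$-value $i$, the largest possible. Consequently the old left-neighborhood boundary of $y$ is still a valid candidate at time $i$ and can only be displaced rightward by an accessed element; symmetrically on the right. Thus $\Gamma(y,i)\subseteq\Gamma(y,i-1)$, so $r(y,i)\leq r(y,i-1)$ and the non-accessed contribution to $\Phi(i)-\Phi(i-1)$ is at most zero and can be dropped.

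For step (b) I handle the right and left sweeps symmetrically. Focus on the right: let the accessed elements be $s_i=y_j<y_{j+1}<\cdots<y_K$. Because \GreedyASS accesses $y_k$ exactly when $\rho(y_k,i-1)$ exceeds the running maximum, we get $\rho(y_j,i-1)<\rho(y_{j+1},i-1)<\cdots<\rho(y_K,i-1)$. Two structural consequences follow. (i) No element in $[s_i,y_k-1]$ has $\rho$-value at time $i-1$ reaching $\rho(y_k,i-1)$, so the left boundary of $\Gamma(y_k,i-1)$ lies strictly left of $s_i$ while its right boundary is $y_{k+1}$ for $k<K$; in particular the ``before'' sizes form a monotone chain $\sigma(y_j,i-1)\leq\cdots\leq\sigma(y_K,i-1)$. (ii) At time $i$, $\Gamma(y_k,i)=\{y_{k-1}+1,\dots,y_{k+1}-1\}$ since only accessed elements carry $\rho$-value $i$; combined with (i) this yields $\sigma(y_k,i)\leq\sigma(y_{k+1},i-1)$, hence $r(y_k,i)\leq r(y_{k+1},i-1)$, for $j+1\leq k<K$.

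Now I telescope. The right-sweep contribution to the amortized cost is $(K-j)+\sum_{k=j+1}^K(r(y_k,i)-r(y_k,i-1))$. Combining (ii) with the standard concavity lemma $\lg a+\lg b\leq 2\lg c-2$ whenever $a+b\leq c$, applied to a before/after pair of sizes sandwiched inside the next ``before'' size along the sweep, I would derive a per-step bound of the form $1+r(y_k,i)-r(y_k,i-1)\leq 3(r(y_{k+1},i-1)-r(y_k,i-1))$, modelled on the splay-tree zig-zig inequality. The sum telescopes to $3r(y_K,i-1)-3r(s_i,i-1)+O(1)\leq 3\lfloor\lg W\rfloor-3r(s_i,i-1)+O(1)$ on the right sweep, the symmetric left-sweep bound doubles this to the ``$6$'' in the statement, and the additive $5$ absorbs the separate treatment of $s_i$ itself and of the boundary accessed elements $y_1,y_K$ whose ``after'' neighborhoods are one-sided.

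I expect the main obstacle to be locating the correct ``$a+b\leq c$'' triple at each step. In splay trees the subtree structure produces the inequality $\sigma(x)+\sigma'(g)\leq\sigma(g)$ automatically; here the triple must be read off the sweep's before/after sizes in a way that respects both the monotonicity of the $\sigma(y_k,i-1)$ and the overlap pattern of the $\Gamma(y_k,i)$'s. The extreme element $y_K$, whose ``after'' neighborhood is unbounded on the right by any next accessed element, must be estimated separately by the crude bound $\lfloor\lg W\rfloor$, and is the principal source of the additive constant in the final expression.
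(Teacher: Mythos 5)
Your setup is sound and matches the paper's: non-accessed elements contribute nothing positive to the potential change (the paper's Lemma~\ref{lem:no_access} in fact shows their neighborhoods are exactly unchanged), and your structural observations about the sweep are correct, in particular that $\Gamma(y_k,i)\subset\Gamma(y_{k+1},i-1)$ and hence $r(y_k,i)\le r(y_{k+1},i-1)$, which is precisely the paper's inequality~(\ref{eq:ranks}). The gap is the step you yourself flag as the ``main obstacle'': the zig-zig-style per-step bound $1+r(y_k,i)-r(y_k,i-1)\le 3\bigl(r(y_{k+1},i-1)-r(y_k,i-1)\bigr)$ cannot be established, and is in fact false. The splay-tree concavity trick needs two \emph{disjoint} sets whose union lies inside the larger one; here the candidate sets are nested, not disjoint: $\Gamma(y_k,i)\subseteq\Gamma(y_k,i-1)\subseteq\Gamma(y_{k+1},i-1)$, because the before-neighborhood of every accessed element already extends left past $s_i$. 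So no valid ``$a+b\le c$'' triple exists to force the after-rank to drop when ranks coincide. Concretely, with unit weights one can arrange the last-access times so that for some accessed $y_k$ the three sets $\Gamma(y_k,i)$, $\Gamma(y_k,i-1)$, $\Gamma(y_{k+1},i-1)$ have sizes $33$, $39$, $41$, giving $r(y_k,i)=r(y_k,i-1)=r(y_{k+1},i-1)$, and then your per-step inequality reads $1\le 0$.

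The paper's proof diverges exactly at this point. It does not charge each step multiplicatively; it telescopes $1+r(y_k,i)-r(y_k,i-1)\le r(y_{k+1},i-1)-r(y_k,i-1)$ whenever $r(y_k,i)<r(y_{k+1},i-1)$, calls the remaining accessed elements (those with $r(y_k,i)=r(y_{k+1},i-1)$) \emph{stubborn}, pays an extra $+1$ for each, and then proves a separate counting lemma (Lemma~\ref{lem:stubborn}): a stubborn element satisfies $\sigma(y_k,i)>\frac{1}{2}\sigma(y_{k+1},i-1)$, and since $\Gamma(y_k,i)$ is disjoint from $\Gamma_{\ell}(v,i-1)$ for every accessed $v<y_k$ while both lie inside $\Gamma_{\ell}(y_{k+1},i-1)$, the left sizes of stubborn successors essentially double along the sweep, so there are at most $1+2\Floor{\lg W}-2r(s_i,i-1)$ stubborn elements per side. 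That doubling argument on left sizes (note it uses the disjointness you were looking for, but globally along the sweep rather than per step) is the missing ingredient; without it, or some substitute bound on the number of equal-rank steps, your telescoping bounds only the rank changes and not the $+1$ paid at every access.
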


\subsection{Immediate Consequences}

Before we proceed to prove Lemma~\ref{lem:access}, we will show several of its
consequences. Recall that the equivalence between the arboral and geometric
models mean these corollaries apply to both \GreedyASS and \GreedyFuture.
The proofs of these corollaries mirror the proofs by Sleator and
Tarjan for splay trees~\cite{st-sbst-85}.

\begin{corollary}[Balance Theorem]
  \label{cor:balance}
  The total cost of searching is $O((m+n)\times$
  $\log n)$.
\end{corollary}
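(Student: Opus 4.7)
The plan is to apply the Access Lemma with the uniform weight assignment $w(x) = 1$ for every $x \in \{1,\dots,n\}$, mirroring the proof of the corresponding Sleator--Tarjan balance theorem for splay trees.

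With uniform weights, we have $W = n$, so $\lfloor \lg W \rfloor = \lfloor \lg n \rfloor$. For any element $x$ and time $i$, the inclusive neighborhood $\Gamma(x,i)$ always contains $x$ itself, so $\sigma(x,i) \geq w(x) = 1$ and $r(x,i) \geq 0$. On the other hand, $\Gamma(x,i) \subseteq \{1,\dots,n\}$ gives $\sigma(x,i) \leq n$, so $r(x,i) \leq \lfloor \lg n \rfloor$. Plugging $r(s_i,i-1) \geq 0$ into the Access Lemma, the amortized cost of each search is at most $5 + 6\lfloor \lg n \rfloor = O(\log n)$, so the total amortized cost over $m$ searches is $O(m \log n)$.

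Next I would convert amortized cost back to actual cost using the standard telescoping. Summing $1 + |P_i| + \Phi(i) - \Phi(i-1)$ over $i = 1,\dots,m$ gives the actual cost $\sum_i (1+|P_i|)$ plus $\Phi(m) - \Phi(0)$. Since $0 \leq \Phi(i) \leq n \lfloor \lg n \rfloor$ at every time, the net potential change is bounded in absolute value by $n \lfloor \lg n \rfloor = O(n \log n)$. Thus the total actual cost is at most $O(m \log n) + O(n \log n) = O((m+n)\log n)$, which is exactly the claimed bound. Finally, since the geometric and arboral models are constant-factor equivalent, this bound transfers from \GreedyASS to \GreedyFuture.

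There is essentially no obstacle here: the weights are chosen precisely to make $\lg W - r(s_i,i-1)$ a uniform $O(\log n)$ bound, and the additive $O(n \log n)$ term from the initial potential is absorbed into the $(m+n)\log n$ bound. The only thing worth double-checking is the trivial lower bound $r(x,i) \geq 0$, which follows directly from $x \in \Gamma(x,i)$, and the upper bound on $\Phi$, which uses the fact that neighborhoods are subsets of $\{1,\dots,n\}$.
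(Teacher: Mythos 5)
Your proof is correct and is essentially the argument the paper intends: the paper omits the corollary proofs, stating only that they mirror Sleator and Tarjan's, and your uniform-weight instantiation of the Access Lemma plus the $O(n\log n)$ bound on the net potential change is exactly that standard argument.
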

\begin{corollary}[Static Optimality Theorem]
  Let~$t(x)$ be the number of times~$x$ appears in the search
  sequence~$S$.
  If every element is searched at least once, the total cost of searching is
  $O\Paren{m+\sum_{x=1}^n t(x)\log\Paren{m/t(x)}}$.
\end{corollary}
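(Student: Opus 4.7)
The plan is to mimic the Sleator--Tarjan static optimality argument for splay trees, using the Access Lemma as a black box. Assign each element the weight $w(x) = t(x)$, which is a positive real because every element is accessed at least once. Then $W = \sum_{x} w(x) = m$, so $\lfloor \lg W \rfloor = \lfloor \lg m \rfloor$.

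Next, I would exploit the fact that $x \in \Gamma(x,i)$ always, so $\sigma(s_i,i-1) \geq w(s_i) = t(s_i)$ and therefore $r(s_i,i-1) \geq \lfloor \lg t(s_i) \rfloor$. Plugging into Lemma~\ref{lem:access}, the amortized cost at time $i$ is at most
\[
5 + 6\lfloor \lg m\rfloor - 6\lfloor \lg t(s_i)\rfloor \;\leq\; 11 + 6\lg\!\bigl(m/t(s_i)\bigr).
\]
Summing over all searches and regrouping by element gives a total amortized cost of
\[
\sum_{i=1}^m \Bigl(11 + 6\lg\!\bigl(m/t(s_i)\bigr)\Bigr) \;=\; 11m + 6\sum_{x=1}^n t(x)\lg\!\bigl(m/t(x)\bigr),
\]
which already matches the desired bound up to the telescoping potential correction.

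The remaining step is to convert amortized cost to real cost by bounding $\Phi(0) - \Phi(m)$. Since $-\Phi(m) \leq 0$ (all sizes are at least $w(x) \geq 1$), it suffices to bound $\Phi(0)$. At time $0$ no searches have occurred, so $\rho(\cdot,0) \equiv 0$ and hence each neighborhood $\Gamma(x,0) = \{x\}$, giving $\sigma(x,0) = w(x) = t(x)$. Therefore $\Phi(0) = \sum_x \lfloor \lg t(x)\rfloor \leq \sum_x t(x) = m$, where I use that $\lg t \leq t$ for $t \geq 1$. Adding this $O(m)$ term to the amortized total yields $O\!\bigl(m + \sum_x t(x)\log(m/t(x))\bigr)$, which is exactly the claimed bound, and the equivalence between the geometric and arboral models lets us transfer the conclusion from \GreedyASS to \GreedyFuture.

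There is no genuine obstacle here; the only part requiring any care is verifying $\Phi(0) = O(m)$, which comes down to observing that the neighborhoods at time zero are trivial. Once that observation is in place, the calculation is essentially the standard Sleator--Tarjan template with ranks defined through neighborhoods instead of subtrees.
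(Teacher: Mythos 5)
Your proof is correct and is exactly the argument the paper intends: it states only that the corollaries ``mirror the proofs by Sleator and Tarjan,'' and your instantiation of the Access Lemma with $w(x)=t(x)$ (so $W=m$ and $r(s_i,i-1)\geq\lfloor\lg t(s_i)\rfloor$), followed by the potential correction $\Phi(0)-\Phi(m)\leq m$ using $\Gamma(x,0)=\{x\}$, is that standard template. No issues to report.
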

\begin{corollary}[Static Finger Theorem]
  Fix some element~$f$.
  The total cost of searching is $O(m+\sum_{i=1}^m\log(|s_i-f|+1))$.
\end{corollary}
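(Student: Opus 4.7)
The plan is to apply Lemma~\ref{lem:access} after choosing element weights that concentrate mass near the finger. Specifically, I would set $w(x) = 1/(|x-f|+1)^2$ for each $x \in \{1,\ldots,n\}$. Splitting the sum on the two sides of $f$ and comparing with $\sum_{k\ge 1} 1/k^2$ gives $W = \sum_x w(x) \leq 2 \pi^2/6 < 4$, so $\lfloor \lg W \rfloor$ is a constant.

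Next I would bound $r(s_i, i-1)$ from below using the trivial observation that the inclusive neighborhood $\Gamma(s_i, i-1)$ always contains $s_i$ itself. This yields $\sigma(s_i, i-1) \geq w(s_i) = 1/(|s_i-f|+1)^2$ and therefore $r(s_i, i-1) \geq \lfloor \lg w(s_i)\rfloor \geq -2\lg(|s_i-f|+1) - 1$. Substituting into the access lemma, the amortized cost of the $i$th search is at most $5 + 6\lfloor \lg W\rfloor + 12\lg(|s_i-f|+1) + 6 = O(1 + \log(|s_i-f|+1))$. Summing over $i = 1,\ldots,m$ bounds the total amortized cost by $O(m + \sum_{i=1}^m \log(|s_i-f|+1))$.

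The last step is to convert amortized cost to real cost by adding the net potential drop $\Phi(0) - \Phi(m)$. At time $0$ no accesses have occurred, so $\rho(x,0)$ is the same for every $x$; consequently every initial inclusive neighborhood is the singleton $\{x\}$, giving $\sigma(x,0) = w(x)$ and $\Phi(0) = \sum_x \lfloor \lg w(x)\rfloor$. Because $\sigma(x, m) \geq w(x)$ also holds trivially (since $x \in \Gamma(x,m)$), we have $\Phi(m) \geq \sum_x \lfloor \lg w(x)\rfloor = \Phi(0)$, so $\Phi(0) - \Phi(m) \leq 0$ and the real cost is bounded by the amortized cost. The main point to watch out for is exactly this potential bookkeeping: a more aggressive weighting scheme could introduce an unwanted additive $\Theta(n \log n)$ term, but the chosen quadratic-decay weights make the initial and final potentials satisfy $\Phi(m) \geq \Phi(0)$, so no such term appears and the stated bound follows.
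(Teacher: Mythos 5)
Your proof is correct and takes essentially the approach the paper intends (the Sleator--Tarjan-style reweighting $w(x)=1/(|x-f|+1)^2$ plugged into Lemma~\ref{lem:access}), and your potential bookkeeping is exactly the observation needed to justify the absence of the additive $n\log n$ term that appears in Sleator and Tarjan's version for splay trees. One small remark: since $\Gamma(x,0)=\{x\}$ makes $\Phi(0)=\sum_{x}\lfloor\lg w(x)\rfloor$ the minimum possible value of the potential, the inequality $\Phi(0)\leq\Phi(m)$ holds for \emph{any} positive weight assignment, so no choice of weights could have introduced the feared additive term.
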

\begin{corollary}[Working Set Theorem]
  \label{cor:working_set}
  Let~$d(i)$ be the number of distinct elements in the search sequence~$S$
  before~$s_i$ and since the last instance of~$s_i$. If there are no earlier
  instances of~$s_i$, then let~$d(i)=i-1$.
  The total cost of searching is $O(m+\sum_{i=1}^m\log(d(i)+1))$.
\end{corollary}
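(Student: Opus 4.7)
The plan is to apply Lemma~\ref{lem:access} with a time-varying weight assignment in the style of Sleator and Tarjan's working set proof for splay trees. For each time~$i$, let $j_i(x)$ denote the working-set rank of~$x$ just before the $i$th search: one plus the number of distinct elements in $\langle s_1,\dots,s_{i-1}\rangle$ strictly after the last access of~$x$, with never-accessed elements ordered by first future access time to break ties. Assign $w_i(x)=1/j_i(x)^2$. Since $j_i$ is a permutation of $\{1,\dots,n\}$, we have $W_i=\sum_x w_i(x)\leq \pi^2/6<2$ and hence $\lfloor\lg W_i\rfloor=0$. Either $s_i$ has appeared before and $j_i(s_i)=d(i)+1$, or it has not and the tie-breaking guarantees $j_i(s_i)\leq i=d(i)+1$; either way $w_i(s_i)\geq 1/(d(i)+1)^2$, so $\sigma_i(s_i,i-1)\geq 1/(d(i)+1)^2$ and $r_i(s_i,i-1)\geq -2\lceil\lg(d(i)+1)\rceil$. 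Lemma~\ref{lem:access} then bounds the amortized cost under weights~$w_i$ by $a_i=O(1+\log(d(i)+1))$, summing to $O(m+\sum_i\log(d(i)+1))$.

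To convert amortized costs into actual costs, I must account for the weights changing between steps. Writing $\Phi(i;w)$ for the potential at time~$i$ under fixed weights~$w$, a direct telescoping gives
\[
  \sum_{i=1}^m c_i = \sum_{i=1}^m a_i + \Phi(0;w_1) - \Phi(m;w_m) + \sum_{i=1}^{m-1}\bigl(\Phi(i;w_{i+1})-\Phi(i;w_i)\bigr).
\]
The boundary contribution is $O(n\log n)$, which is absorbed because each first access of a new element already forces a $\Theta(\log i)$ term into $\sum_i\log(d(i)+1)$ (via $d(i)=i-1$ in that case). The remaining task is to bound the per-step weight-change correction $\Phi(i;w_{i+1})-\Phi(i;w_i)$.

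For this I exploit the structure of \GreedyASS neighborhoods immediately after an access. At time~$i$ the value $\rho(s_i,i)=i$ is the unique maximum of $\rho(\cdot,i)$. For any $x\neq s_i$ with $s_i>x$, the smallest integer strictly greater than~$x$ with $\rho\geq\rho(x,i)$ is therefore at most~$s_i$, which forces $s_i\notin\Gamma_r(x,i)$; the symmetric argument handles $s_i<x$. Thus $s_i$ lies outside $\Gamma(x,i)$ for every $x\neq s_i$. Now compare $w_i$ and $w_{i+1}$: only $w(s_i)$ strictly increases (from $1/(d(i)+1)^2$ to~$1$), while every other weight either drops (for elements whose working-set rank rose by one) or is unchanged. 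Since $s_i$ is absent from every neighborhood other than its own, no $\sigma(x,i)$ with $x\neq s_i$ can grow, so no such $r(x,i)$ rises. For $x=s_i$, the same maximality of $\rho(s_i,i)$ yields $\Gamma(s_i,i)=\{1,\dots,n\}$, so $\sigma(s_i,i)$ equals the total weight, which lies in a bounded constant interval under both $w_i$ and $w_{i+1}$; thus $r(s_i,i)$ changes by $O(1)$. Summing over all $i$ gives $O(m)$, closing the argument.

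The main obstacle is the neighborhood observation: without the fact that the freshly accessed element $s_i$ vanishes from every other $\Gamma(x,i)$ at time~$i$, the near-unit weight hike at~$s_i$ could simultaneously inflate many sizes $\sigma(x,i)$ and send the weight-change bookkeeping out of control. Fortunately, that disappearance follows directly from the definition of the neighborhoods in terms of the maximality of $\rho(s_i,i)$, which keeps the rest of the argument routine.
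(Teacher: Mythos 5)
Your overall strategy---Sleator--Tarjan-style dynamic reweighting, Lemma~\ref{lem:access} applied with the current weights, and the structural fact that after search~$i$ the element~$s_i$ lies in no neighborhood other than its own---is the same as the paper's, and that structural fact you do establish correctly, since it only needs $\rho(s_i,i)=i\geq\rho(x,i)$ for every~$x$. The genuine gap is in your treatment of $x=s_i$ itself. The claim that $\rho(s_i,i)$ is the \emph{unique} maximum, and hence that $\Gamma(s_i,i)=\{1,\dots,n\}$, is false whenever $P_i\neq\emptyset$: every element of~$P_i$ also has last access time~$i$, and the nearest such elements on either side of~$s_i$ cut off $\Gamma_{\ell}(s_i,i)$ and $\Gamma_r(s_i,i)$. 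So $\sigma(s_i,i)$ need not equal the total weight, and your conclusion that the reweighting changes $r(s_i,i)$ by only $O(1)$ (hence an $O(m)$ total correction) is unjustified and in general wrong: if $\Gamma(s_i,i)$ is short and carries little weight, raising $w(s_i)$ from $1/(d(i)+1)^2$ to~$1$ can raise $r(s_i,i)$ by $\Theta(\log(d(i)+1))$. The repair is what the paper does: after reweighting $r(s_i,i)\leq\lfloor\lg W\rfloor=0$, while before reweighting $r(s_i,i)\geq\lfloor\lg w_i(s_i)\rfloor\geq-2\lg(d(i)+1)-1$, so the per-step correction is $O(\log(d(i)+1))$---weaker than your $O(1)$, but it still sums into the claimed bound.

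A second, smaller soft spot is the boundary term. You bound it by $O(n\log n)$ and claim it is absorbed because every first access contributes $\Theta(\log i)$; but there are only $D$ first accesses, where $D$ is the number of distinct searched elements, so $\sum_i\log(d(i)+1)$ is only guaranteed to be $\Omega(D\log D)$, which absorbs nothing that grows with~$n$ when $n\gg D,m$, and the corollary has no additive term in~$n$. The finer accounting closes this: elements never searched are never accessed by \GreedyASS, keep $\Gamma(x,i)=\{x\}$ for all~$i$, and under your weight scheme never change weight, so their net contribution to all the potential differences is zero; elements that are searched always carry weight at least about $1/D^2$ once accessed, so their total boundary contribution is $O(D\log D)$, which is indeed absorbed by the first-access terms as you intended.
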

Note that Corollary~\ref{cor:working_set} implies other upper bounds on
\GreedyFuture's performance such as key-independent
optimality~\cite{i-kio-05}.

\subsection{Telescoping Rank Changes}

We proceed to prove Lemma~\ref{lem:access}. First we observe the following.

\begin{lemma}
  \label{lem:no_access}
  Let~$x$ be any element \emph{not} accessed during search~$i$. Then we have
  $\Gamma(x,i-1)=\Gamma(x,i)$.
\end{lemma}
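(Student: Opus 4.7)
The plan is to exploit the characterization given just before the lemma: \GreedyASS accesses $x$ at time $i$ precisely when $s_i \in \Gamma(x,i-1)$. Assuming $x$ is not accessed at time $i$, this gives $s_i \notin \Gamma(x,i-1)$ and $\rho(x,i)=\rho(x,i-1)$, so the boundary elements $a$ and $b$ of $\Gamma(x,i-1)$ are the only things that could possibly fail to certify $\Gamma(x,i)$. I would split into two mirror-image cases, $s_i<x$ (hence $s_i\leq a$) and $s_i>x$ (hence $s_i\geq b$), and describe the first in detail.

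In the case $s_i<x$, I first observe that $\rho(b,i)\geq\rho(x,i)$: indeed, $\rho(b,\cdot)$ is monotone non-decreasing in its second argument, and $\rho(x,i)=\rho(x,i-1)$. The real work is to rule out the appearance of a new candidate strictly between $x$ and $b$. Suppose some $y$ with $x<y<b$ had $\rho(y,i)\geq\rho(x,i)$. Since $y\in\Gamma_r(x,i-1)$ implies $\rho(y,i-1)<\rho(x,i-1)$, the element $y$ must have been accessed at time $i$, so the characterization forces $s_i\in\Gamma(y,i-1)$. Because $s_i<x<y$, in fact $s_i\in\Gamma_\ell(y,i-1)$. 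But the left boundary $a_y$ of $\Gamma(y,i-1)$ is at least $x$, since $\rho(x,i-1)>\rho(y,i-1)$ and $x<y$. Hence $s_i\geq a_y+1\geq x+1$, contradicting $s_i<x$.

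A symmetric inspection handles the left side. The element $a$ itself still certifies the left boundary at time $i$ by the same monotonicity argument. If some $y$ with $a<y<x$ had $\rho(y,i)\geq\rho(x,i)$, then $y$ was accessed at time $i$, so $s_i\in\Gamma(y,i-1)$; since $s_i\leq a<y$ this forces $s_i\in\Gamma_\ell(y,i-1)$. However, $a$ is a valid left-certificate for $y$ at time $i-1$, because $\rho(a,i-1)\geq\rho(x,i-1)>\rho(y,i-1)$, so the left boundary $a_y$ of $\Gamma(y,i-1)$ satisfies $a_y\geq a$, and therefore $s_i\geq a_y+1>a$, again contradicting $s_i\leq a$. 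The case $s_i>x$ is strictly analogous after swapping the roles of left and right, $a$ and $b$, $\Gamma_\ell$ and $\Gamma_r$.

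I do not expect a structural obstacle; the entire argument is a bookkeeping check powered by the already-stated characterization of \GreedyASS accesses. The only subtle point is to apply that characterization to the \emph{hypothetical} intruder $y$ rather than to $x$ itself: the contradiction comes from observing that accessing any such $y$ would require $s_i$ to lie on the far side of $x$ (or of $a$), which our case assumption forbids.
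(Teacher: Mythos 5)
Your proposal is correct and in substance matches the paper's proof: both arguments hinge on the fact that, since $x$ is not accessed, $s_i$ is cut off from $x$ by a boundary element whose last access time is at least $\rho(x,i-1)$ (your $a$, the paper's $x_\ell$), and this prevents any element of $\Gamma_{\ell}(x,i-1)$ or $\Gamma_r(x,i-1)$ from being accessed at time $i$ or becoming a new boundary. The only difference is stylistic: the paper argues directly from the left-to-right sweep (accessed elements have strictly increasing last access times, so everything past the blocker with smaller $\rho$ is skipped), whereas you argue contrapositively by applying the ``accessed iff $s_i\in\Gamma(y,i-1)$'' characterization to a hypothetical intruder $y$ and deriving a positional contradiction; both characterizations are stated in the paper just before the lemma, so the two proofs are essentially the same.
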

\begin{proof}
  Assume without loss of generality that~$x>s_i$. Let~$x_{\ell}$ be the greatest
  element in $\Set{s_i,s_i+1,\dots,x-1}$ such that
  $\rho(x_{\ell},i-1)\geq\rho(x,i-1)$.
  Element~$x_{\ell}$ must exist, because \GreedyASS
  does not access~$x$ at time~$i$.
  No elements in $\Set{x_{\ell}+1,\dots,x-1}$ are accessed at time~$i$
  since they have smaller last access time than~$x_{\ell}$,
  so $\Gamma_{\ell}(x,i-1)=\Gamma_{\ell}(x,i)$. Likewise, no elements in~$\Gamma_r(x,i-1)$
  are accessed at time~$i$ since they have smaller last access time than~$x$.
  The inclusive neighborhood of~$x$ (as well as its size and rank) remains
  unchanged by the search.
\end{proof}

Consider a search at time~$i$.
Lemma~\ref{lem:no_access}
immediately implies the amortized cost of the search is equal to
\begin{equation}
  \label{eq:cost_sum}
  \sum_{x \in P_i \cup\Set{s_i}} \Paren{1 + r(x,i)-r(x,i-1)}.
\end{equation}
Suppose we access an element~$x\neq s_i$. Assume~$x > s_i$ without loss of
generality. If it exists,
let~$x_r$ be the least accessed element greater than~$x$. We
call~$x_r$ the \EMPH{successor} of~$x$.
Observe that~$\Gamma(x,i)$ contains
a subset of the elements in~$\Set{s_i+1,\dots,x_r-1}$ while~$\Gamma(x_r,i-1)$
contains a superset of the elements in~$\Set{s_i,\dots,x_r}$. This
fact implies $\Gamma(x,i)\subset\Gamma(x_r,i-1)$ which in turn implies
\begin{equation}
  \label{eq:ranks}
  \sigma(x,i) < \sigma(x_r,i-1) \text{ and } r(x,i) \leq r(x_r,i-1).
\end{equation}
If the second inequality is strict, then
\begin{equation}
  \label{eq:good_telescope}
  1+r(x,i)-r(x,i-1) \leq r(x_r,i-1)-r(x,i-1).
\end{equation}
Otherwise,
\begin{equation}
  \label{eq:bad_telescope}
  1+r(x,i)-r(x,i-1) = 1+r(x_r,i-1)-r(x,i-1).
\end{equation}
Call an accessed element~$x>s_i$
a \EMPH{stubborn element} if~$x$ has a successor~$x_r$
and $r(x,i) = r(x_r,i-1)$.
From (\ref{eq:cost_sum}), (\ref{eq:good_telescope}),
and (\ref{eq:bad_telescope}) above, the amortized cost of accessing elements
greater than~$s_i$ forms a telescoping sum and we derive the following lemma.

\begin{lemma}
  \label{lem:cost_right}
  Let~$\alpha$ be the number of elements greater than~$s_i$ that are stubborn
  and let~$e_{r\ell}$ and~$e_{rr}$ be the least and greatest elements
  greater than~$s_i$ to be accessed. The amortized cost of accessing elements
  greater than~$s_i$ is
  $$1+\alpha+r(e_{rr},i)-r(e_{r\ell},i-1).$$
\end{lemma}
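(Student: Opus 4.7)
The plan is to exploit the telescoping structure already set up in (\ref{eq:good_telescope}) and (\ref{eq:bad_telescope}). First I would restrict (\ref{eq:cost_sum}) to accessed elements greater than $s_i$; by Lemma~\ref{lem:no_access}, no unaccessed element contributes to the rank change, so the amortized cost of accessing elements greater than $s_i$ equals $\sum_{x \in P_i,\, x > s_i}\bigl(1 + r(x,i) - r(x,i-1)\bigr)$. I would then enumerate these elements in increasing order as $x_1 < x_2 < \cdots < x_k$, so that $x_1 = e_{r\ell}$, $x_k = e_{rr}$, and — crucially — the successor of $x_j$ in the sense defined in the text (the least accessed element greater than $x_j$) is exactly $x_{j+1}$ for every $j < k$.

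Next, for each $j < k$ I would invoke (\ref{eq:good_telescope}) or (\ref{eq:bad_telescope}) with $x = x_j$ and $x_r = x_{j+1}$, depending on whether $x_j$ is stubborn. Both cases are subsumed by the single inequality
$$1 + r(x_j,i) - r(x_j,i-1) \;\leq\; c_j + r(x_{j+1},i-1) - r(x_j,i-1),$$
where $c_j = 1$ precisely when $x_j$ is stubborn and $c_j = 0$ otherwise. Summing over $j = 1,\dots,k-1$ makes the rank terms telescope to $r(x_k,i-1) - r(x_1,i-1)$ while the coefficients accumulate to $\alpha$, since $x_k$ has no accessed successor greater than itself and thus cannot be stubborn.

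Finally, I would add back the one omitted contribution, $1 + r(x_k,i) - r(x_k,i-1)$ from $x_k = e_{rr}$. The two occurrences of $r(x_k,i-1)$ cancel, leaving $1 + \alpha + r(e_{rr},i) - r(e_{r\ell},i-1)$ as claimed. I do not expect a genuine obstacle here: the real work was already done in establishing (\ref{eq:good_telescope}) and (\ref{eq:bad_telescope}), and what remains is essentially bookkeeping. The only point that warrants a line of justification is the identification of the text's ``successor'' with $x_{j+1}$, which is immediate from the fact that $x_1,\dots,x_k$ are by definition all of the accessed elements above $s_i$.
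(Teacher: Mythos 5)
Your proposal is correct and is essentially the paper's own (largely implicit) argument: the paper derives the lemma directly from (\ref{eq:cost_sum}), (\ref{eq:good_telescope}), and (\ref{eq:bad_telescope}) by exactly the telescoping sum you spell out, including the observations that the successor of each accessed $x_j>s_i$ is the next accessed element $x_{j+1}$ and that the greatest one, $e_{rr}$, has no successor and so is not stubborn. Note only that, as you in fact derived, the quantity is an upper bound on the amortized cost (the non-stubborn case uses an inequality), which is precisely how the lemma is applied in the proof of Lemma~\ref{lem:access}.
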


\subsection{Counting Stubborn Elements}
The biggest technical challenge remaining is to upper bound
the number of stubborn elements~$\alpha$. We have
the following lemma.

\begin{lemma}
  \label{lem:stubborn}
  The number of accessed elements greater than~$s_i$ which are stubborn is at
  most
  $$1+2\Floor{\lg W} -2r(s_i,i-1)$$
\end{lemma}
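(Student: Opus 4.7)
Let $x_1<x_2<\cdots<x_k$ be the accessed elements greater than $s_i$, set $x_0:=s_i$, and abbreviate $r_j:=r(x_j,i)$, $r_j':=r(x_j,i-1)$, $\sigma_j:=\sigma(x_j,i)$, $\sigma_j':=\sigma(x_j,i-1)$. My plan is to upgrade the stubborn condition into a coincidence of three rank values, and then exploit disjointness of inclusive neighborhoods to limit how many stubborn indices can share a rank value.

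\textbf{Step 1 (rank collapse).} Because \GreedyASS accesses $x_1,\dots,x_k$ only in strict order of increasing $\rho(\cdot,i-1)$, every element in $\{s_i,s_i+1,\dots,x_j-1\}$ has $\rho(\cdot,i-1)<\rho(x_j,i-1)$. Hence, letting $a_j$ denote the integer with $\Gamma(x_j,i-1)=\{a_j+1,\dots,x_{j+1}-1\}$, we have $a_j<s_i$ and $a_1\ge a_2\ge\cdots\ge a_k$. This gives the strictly nested chain $\Gamma(x_1,i-1)\subsetneq\Gamma(x_2,i-1)\subsetneq\cdots\subsetneq\Gamma(x_k,i-1)$, so the pre-ranks are non-decreasing: $r_1'\le r_2'\le\cdots\le r_k'$. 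The same $\rho$-comparison places $\Gamma(s_i,i-1)$ inside $\Gamma(x_1,i-1)$, so $r(s_i,i-1)\le r_1'$. Also $\Gamma(x_j,i)=\{x_{j-1}+1,\dots,x_{j+1}-1\}\subseteq\Gamma(x_j,i-1)$ because $a_j<s_i\le x_{j-1}$, giving $r_j\le r_j'$. Combining $r_j\le r_j'\le r_{j+1}'$ with the stubborn identity $r_j=r_{j+1}'$ forces the three to coincide: for every stubborn $j$, $r_j=r_j'=r_{j+1}'=:v_j$ with $v_j\in[r(s_i,i-1),\Floor{\lg W}]$.

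\textbf{Step 2 (at most two stubborn per rank value).} Let $j<j'$ both be stubborn with $v_j=v_{j'}=v$. Since $a_{j'}<s_i\le x_{j-1}$ and $x_{j+1}\le x_{j'+1}$, both $\Gamma(x_j,i)$ and $\Gamma(x_{j'},i)$ sit inside $\Gamma(x_{j'},i-1)$. If moreover $j'\ge j+2$ then $j+1\le j'-1$, so $\Gamma(x_j,i)\subseteq\{1,\dots,x_{j'-1}-1\}$ while $\Gamma(x_{j'},i)\subseteq\{x_{j'-1}+1,\dots,n\}$, making the two neighborhoods disjoint. Step 1 yields $\sigma_j,\sigma_{j'}\ge 2^v$ and $\sigma_{j'}'<2^{v+1}$, hence
\[
2^{v+1}\;\le\;\sigma_j+\sigma_{j'}\;\le\;\sigma_{j'}'\;<\;2^{v+1},
\]
a contradiction. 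Thus stubborn indices sharing a rank value must be consecutive, so at most two stubborn indices attain each value.

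\textbf{Step 3 (minimum value, conclusion, obstacle).} For $v=r(s_i,i-1)$ I rerun the argument but pair $\Gamma(x_j,i)$ with $\Gamma(s_i,i-1)$. Whenever $j\ge 2$, $\Gamma(s_i,i-1)\subseteq\{1,\dots,x_1-1\}$ while $\Gamma(x_j,i)\subseteq\{x_{j-1}+1,\dots,n\}\subseteq\{x_1+1,\dots,n\}$, so the two are disjoint; both are contained in $\Gamma(x_j,i-1)$ by the same $\rho$-comparisons as in Step 1; and both have weight at least $2^v$, while $\sigma_j'<2^{v+1}$, again a contradiction. Hence $j=1$ is the only index that can be stubborn with $v_j=r(s_i,i-1)$. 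Summing, each of the $\Floor{\lg W}-r(s_i,i-1)$ values strictly above $r(s_i,i-1)$ contributes at most $2$ stubborn indices and the value $r(s_i,i-1)$ itself contributes at most $1$, giving $\alpha\le 1+2\Floor{\lg W}-2r(s_i,i-1)$. The main obstacle is the $\rho$-bookkeeping of Step 1: establishing $a_{j+1}\le a_j$ and $a_j<s_i$ cleanly across boundary cases is what powers every subsequent containment and disjointness claim, and the lone $+1$ in the final bound reflects precisely the boundary phenomenon that $\Gamma(x_1,i)$ and $\Gamma(s_i,i-1)$ can genuinely overlap when $j=1$.
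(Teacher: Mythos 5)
Your high-level counting strategy (classify stubborn elements by a common rank value, show at most two per value by a disjointness-plus-doubling argument, and at most one at the bottom value) is sound and genuinely different from the paper's proof, which instead introduces left sizes/left ranks and inducts on the left size of the successor of the topmost stubborn element. However, as written your argument has a concrete false step in Step 1 that Steps 2 and 3 inherit. The identity $\Gamma(x_j,i-1)=\{a_j+1,\dots,x_{j+1}-1\}$, and hence the containment $\Gamma(x_j,i)\subseteq\Gamma(x_j,i-1)$ and the inequality $r_j\le r_j'$ (your ``rank collapse''), fail when last access times tie. The neighborhood boundary is defined by $\rho(\cdot,i-1)\ge\rho(x_j,i-1)$, while the sweep only accesses elements with strictly larger $\rho$; so an element $e$ with $x_j<e<x_{j+1}$ and $\rho(e,i-1)=\rho(x_j,i-1)$ truncates the old right neighborhood of $x_j$ at $e$, yet is skipped at time $i$ and therefore lies inside the new neighborhood $\{x_{j-1}+1,\dots,x_{j+1}-1\}$. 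Such ties are unavoidable: every search accesses several elements, which thereafter share a last access time. Concretely, on five elements with search sequence $\langle 4,2,3,5,1\rangle$, at the last search the accessed elements right of $s_5=1$ are $2$ and $4$, and $\Gamma(2,4)=\{1,2\}$ while $\Gamma(2,5)=\{2,3\}$; giving element $3$ a large weight makes $r(2,5)>r(2,4)$, so a neighborhood (and rank) can grow upon access. Consequently the inequality $\sigma_j+\sigma_{j'}\le\sigma_{j'}'$ in Step 2, and the analogous containment into $\Gamma(x_j,i-1)$ in Step 3, are not justified.

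The good news is that the repair is local and preserves your structure: the only containment you are entitled to (and the only one you need) is into the \emph{successor's} old neighborhood, exactly as in the paper's inequality $\Gamma(x,i)\subset\Gamma(x_r,i-1)$. Indeed $\Gamma(x_{j'},i)\subseteq\Gamma(x_{j'+1},i-1)$ always holds (its left boundary $a_{j'+1}$ lies below $s_i\le x_{j'-1}$ and its right boundary lies beyond $x_{j'+1}$), and $\sigma(x_{j'+1},i-1)<2^{v+1}$ follows directly from stubbornness of $j'$, since $r(x_{j'+1},i-1)=r_{j'}=v$. So in Step 2 replace $\Gamma(x_{j'},i-1)$ by $\Gamma(x_{j'+1},i-1)$: for $j'\ge j+2$ the sets $\Gamma(x_j,i)$ and $\Gamma(x_{j'},i)$ are disjoint subsets of $\Gamma(x_{j'+1},i-1)$, each of size at least $2^v$, giving $2^{v+1}\le\sigma_j+\sigma_{j'}\le\sigma(x_{j'+1},i-1)<2^{v+1}$, a contradiction; in Step 3 pair $\Gamma(s_i,i-1)$ with $\Gamma(x_j,i)$ inside $\Gamma(x_{j+1},i-1)$ in the same way. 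With this substitution the rank collapse is never needed -- only the valid facts $r(s_i,i-1)\le r_1'\le\cdots\le r_k'$ and the stubborn identity itself -- and your count $1+2(\lfloor\lg W\rfloor-r(s_i,i-1))$ follows, by a route that is arguably more transparent than the paper's induction.
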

\begin{proof}
  Consider any stubborn element~$x>s_i$ and its successor~$x_r$. Let the
  \EMPH{left size} of~$x$ at time~$i$ be
  $\sigma_{\ell}(x,i)=\sum_{e\in\Gamma_{\ell}(x,i)}w(e)$. Further,
  let the \EMPH{left rank} of~$x$ at time~$i$ be
  $r_{\ell}(x,i)=\Floor{\lg(\sigma_{\ell}(x,i))}$.
  By the definitions of stubborn elements and left sizes we see
  \begin{equation}
    \label{eq:size_gap}
    \sigma(x,i) > \frac{1}{2}\sigma(x_r,i-1) >
      \frac{1}{2}\sigma_{\ell}(x_r,i-1).
  \end{equation}
  We note that for any accessed element~$v$ (stubborn or not) with 
  $s_i < v < x$ we have
  \begin{equation}
    \label{eq:gap_gap}
    \sigma_{\ell}(v,i-1) < \frac{1}{2}\sigma_{\ell}(x_r,i-1)
  \end{equation}
  by (\ref{eq:size_gap})
  since every element of~$\Gamma_{\ell}(v,i-1)$ is
  in~$\Gamma_{\ell}(x_r,i-1)$, but none of these elements are
  in~$\Gamma(x,i)$ since the left neighborhood of~$x$ at time~$i$ cannot
  extend past~$v$. Further,
  \begin{equation}
    \label{eq:s_base}
    \sigma_{\ell}(x,i-1) \geq \sigma(s_i,i-1)
  \end{equation}
  since all weights are positive and every element in~$\Gamma(s_i,i-1)$
  is also in~$\Gamma_{\ell}(x,i-1)$.

  Let~$z>s_i$ be
  the greatest stubborn element, and let~$z_r$ be its successor.
  We will inductively argue the number of stubborn elements is at most
  $$1+2r_{\ell}(z_r,i-1)-2r(s_i,i-1)$$
  which is a stronger statement than that given in the lemma. The argument
  can be divided into two cases.

  \begin{enumerate}
  \item
    Suppose ${\sigma_{\ell}(z_r,i-1)<2\sigma(s_i,i-1)}$.
    For any stubborn element~$v$ between~$s_i$ and~$z$ we have
    $$\sigma_{\ell}(v,i-1) < \sigma(s_i,i-1)$$
    by (\ref{eq:gap_gap}).
    There can be no such element~$v$ by (\ref{eq:s_base}),
    making~$z$ the only stubborn element.
    The total number of stubborn elements is
    \begin{align*}
      1 &\leq 1+2r_{\ell}(z,i-1)-2r(s_i,i-1) \\
        &\leq 1+2r_{\ell}(z_r,i-1)-2r(s_i,i-1)
    \end{align*}
    by (\ref{eq:s_base}) and the definition of left rank.
  \item
    Now suppose $\sigma_{\ell}(z_r,i-1)\geq2\sigma(s_i,i-1)$.
    Consider any stubborn element~$v$
    with successor~$v_r$ such that $s_i < v < v_r < z$.
    Note that if a stubborn element exists with~$z$ as its successor,~$v$
    cannot be this stubborn element.
    We have
    $$\sigma_{\ell}(v_r,i-1) < \frac{1}{2}\sigma_{\ell}(z_r,i-1)$$
    by (\ref{eq:gap_gap}).
    By induction on the left sizes of stubborn element
    successors greater than~$s_i$,
    the successors of at most
    $$1+2\Floor{\lg\Paren{\frac{1}{2}\sigma_{\ell}(z_r,i-1)}}-2r(s_i,i-1)$$
    stubborn elements can have this smaller left size. Counting~$z$
    and the one other stubborn element that may exist with~$z$ as
    its successor, the total number of stubborn
    elements is at most
    $$3+2\Floor{\lg\Paren{\frac{1}{2}\sigma_{\ell}(z_r,i-1)}}-2r(s_i,i-1)
      =1+2r_{\ell}(z_r,i-1)-2r(s_i,i-1).$$
  \end{enumerate}
\end{proof}

\subsection{Finishing the Proof}

We now conclude the proof of Lemma~\ref{lem:access}.
\begin{proof}
  By Lemma~\ref{lem:no_access},
  the amortized cost of accessing~$s_i$ alone is
  $$1+r(s_i,i)-r(s_i,i-1) \leq 5+6\Floor{\lg W}-6r(s_i,i-1)$$
  so the lemma holds in this case.

  If all other accessed elements are greater than~$s_i$, let~$e_{r\ell}$
  and~$e_{rr}$ be the least and greatest of these elements.
  Observe $r(e_{r\ell},i-1) \geq r(s_i,i)$ and $r(e_{rr},i) \leq \Floor{\lg W}$.
  By Lemma~\ref{lem:cost_right} and Lemma~\ref{lem:stubborn},
  the total amortized cost of accessing elements is at most
  \begin{align*}
    \lefteqn{3 + r(s_i,i) - 3r(s_i,i-1) + 2\Floor{\lg W}
      + r(e_{rr},i)-r(e_{r\ell},i-1)} \\
    &\qquad\leq 3 + 3\Floor{\lg W} - 3r(s_i,i-1)\hspace{1.5in} \\
    &\qquad\leq 5 + 6\Floor{\lg W} - 6r(s_i,i-1)
  \end{align*}
  so the lemma holds in this case. It also holds in the symmetric case when
  all accessed elements are smaller than~$s_i$.

  Finally, consider the case when there are accessed elements both greater than
  and less than~$s_i$. Let~$e_{\ell\ell}$
  and~$e_{\ell r}$ be the least and greatest
  elements \emph{less than}~$s_i$. Observe $r(e_{\ell\ell},i)\leq\Floor{\lg W}$
  and $r(e_{\ell r}, i-1)\geq r(s_i,i-1)$. By two applications of
  Lemma~\ref{lem:cost_right} and Lemma~\ref{lem:stubborn}, the total amortized
  cost of the search is at most
  \begin{align*}
    \lefteqn{5+r(s_i,i)-5r(s_i,i-1)+4\Floor{\lg W}+r(e_{rr},i)
      -r(e_{r\ell},i-1)}\\
      &\qquad\quad+r(e_{\ell\ell},i)-r(e_{\ell r},i-1)\hspace{1.5in}\\
    &\qquad\leq 5 +6\Floor{\lg W} - 6r(s_i,i-1)\hspace{1.5in}
  \end{align*}
\end{proof}

\section{A Sequential Access Theorem}

The working set bound proven above shows that \GreedyFuture has good
\emph{temporal locality}. Accessing an element shortly after its last access
guarantees a small amortized search time. Sleator and Tarjan conjectured
that their splay trees also demonstrate good spatial locality properties in the
form of the dynamic finger conjecture~\cite{st-sbst-85}. This conjecture was
verified by Cole, et~al.~\cite{cmss-dfcst1-00,c-dfcst2-00}.

One special case of the dynamic finger theorem considered by Tarjan
and others was the sequential access
theorem~\cite{t-sastl-85,e-satdc-04,cds-lcbst-06,c-mt-06}.
We give a straightforward proof of the sequential
access theorem when applied to \GreedyFuture. Note that this theorem requires
focusing on an arbitrary fixed BST, so we do not use the
geometric model in the proof.

\begin{theorem}[Sequential Access Theorem]
  \label{thm:sequential}
  Let $S=\langle1,2,\dots,n\rangle$. Starting with an arbitrary BST~$T_0$, the
  cost of running \GreedyFuture on search sequence~$S$ is~$O(n)$.
\end{theorem}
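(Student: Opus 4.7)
My plan begins by establishing a structural invariant by induction on $i$: for every $i \geq 1$, the root of $T_i$ is either $i$ or $i+1$, and consequently the left subtree of the root is exactly $\{1,\dots,\text{root}-1\}$ while the right subtree is exactly $\{\text{root}+1,\dots,n\}$. This follows directly from the \GreedyFuture specification: since $i = s_i \in \tau_i$, the predecessor of $s_{i+1}=i+1$ inside $\tau_i$ is always $i$, so the root of $\tau'_i$ must be either $i+1$ (if $i+1$ lies on the search path) or $i$. A clean corollary is that once an element is placed in the left subtree of the root, no subsequent $\tau_j$ ever touches it.

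Given this invariant, I would classify each search $i \geq 2$ into one of two cases. In Case~A, the root of $T_{i-1}$ already equals $i$, so $|\tau_i|=1$. In Case~B, the root of $T_{i-1}$ equals $i-1$, so $|\tau_i| = 1 + \ell_i$, where $\ell_i$ is the length of the path in the right subtree $T_{i-1}^R$ from its root down the left spine to the leftmost element $i$. The total cost is therefore at most $(\text{depth of }1\text{ in }T_0) + (n-1) + \sum_{i\in\text{Case B}} \ell_i$, so the heart of the proof is to show $\sum_{i\in\text{Case B}} \ell_i = O(n)$.

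To establish this bound I would use an amortized argument with a potential function tracking the ``pending work'' in the right subtree. A natural first candidate is $\Phi(T_i)$ counting left edges in $T_i^R$ joining two elements strictly greater than $i$. Direct inspection of a Case~B rearrangement shows that \GreedyFuture dismantles the entire length-$\ell_i$ left spine and reassembles the $\ell_i - 1$ spine elements other than $i$ into a pure right spine inside $T_i^R$, so no new internal left edges are created; recall that when GreedyFuture recurses on the right portion of $\tau_i$ with an empty subsequence of $<$-values, it simply attaches the smallest remaining element as each successive right child. Thus the only surviving left edges in $T_i^R$ come from the (preserved) hanging right subtrees of the old spine reattaching as left children along the new spine.

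The main obstacle I foresee is the adversarial case in which many hanging subtrees are simultaneously nonempty: then the naive left-edge count drops by only $O(1)$ while the actual cost is $\ell_i$. I expect to resolve this by refining $\Phi$ with an additional contribution for each nonempty hanging subtree (say, one unit per such subtree, or a weight by the depth of its smallest element), so that these subtrees are charged only once and are paid off when \GreedyFuture eventually dismantles them during later accesses. Once the potential is correctly balanced so that the amortized cost per search is $O(1)$, the total cost over all Case~B searches telescopes to at most $\Phi(T_0) + O(n) = O(n)$, which together with the reduction above yields the desired linear bound.
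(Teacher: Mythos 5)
Your setup is sound: the invariant that the root of $T_i$ is $i$ or $i+1$ (the paper's Lemma~\ref{lem:left-spine} in different clothing), the observation that the left subtree of the root is never touched again, the reduction of the theorem to showing $\sum_{i\in\text{Case B}}\ell_i=O(n)$, and the description of a Case~B rearrangement (accessed path nodes reassembled into a pure right chain, with the old hanging right subtrees reattached intact as left children along that chain) are all correct. But the proof stops exactly where the theorem lives. You yourself exhibit the failure mode of your potential --- when all $\ell_i-1$ hanging subtrees are nonempty, the $\ell_i-1$ spine left edges that disappear are replaced by $\ell_i-1$ new left edges, so the left-edge count drops by $O(1)$ against a cost of $\ell_i$ --- and the proposed repair is only a hope, not an argument. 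Worse, the specific fix you suggest (one unit, or a depth-weight, per nonempty hanging subtree) does not pay the bill at the time it comes due: the hanging subtrees survive the rearrangement \emph{unchanged}, merely reattached, so their count and their internal smallest-element depths do not decrease by $\Theta(\ell_i)$ at search $i$ either. Any correct accounting has to charge the $\ell_i$ spine nodes themselves, and nothing in the proposal does that; ``once the potential is correctly balanced'' is precisely the missing content.

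For comparison, the paper needs no potential at all: it proves (Lemma~\ref{lem:deep_access}) that every node is accessed at most once in a position other than the root or the root's right child, because a deeply accessed node is afterwards placed on the right chain and can never re-enter the left subtree of the root's right child before being accessed shallowly; together with Lemma~\ref{lem:left-spine} this gives total cost at most $n+2(n-1)+n=4n-2$ by direct counting. If you insist on an amortized formulation, your own structural observations already suggest a potential that does work: take $\Phi(T)$ to be the \emph{size of the left subtree of the root's right child}. In Case~B, all deeply accessed nodes of search $i$ lie on the left spine of that subtree, and after the rearrangement they become the root, the root's right child, chain nodes, or left-subtree-of-root nodes, while the new left subtree of the root's right child is one of the old hanging subtrees; hence $\Phi$ drops by at least the number of deep accesses, the amortized cost per search is $O(1)$, and $\Phi\le n$ initially. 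Your chosen potential (left edges plus hanging-subtree credits) is not the right invariant, and without replacing it the argument for $\sum\ell_i=O(n)$ --- i.e., the theorem --- is absent.
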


Let $T_0, T_1, \dots, T_n$ be the sequence of search trees configured by
\GreedyFuture. We make the following observations:
\begin{lemma}
  \label{lem:left-spine}
  For all~$i>1$, either node~$i$ is the root of~$T_{i-1}$
  or~$i-1$ is the root and~$i$ is the leftmost node of the
  root's right subtree.
\end{lemma}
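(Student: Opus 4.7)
The plan is to split on whether node $i$ already lies on the search path $\tau_{i-1}$ that GreedyFuture uses when processing $s_{i-1}=i-1$; these two alternatives correspond exactly to the two possibilities in the lemma. Since $\tau_{i-1}$ contains the root of $T_{i-2}$, the reconfigured subtree $\tau'_{i-1}$ sits at the top of $T_{i-1}$, so whichever node GreedyFuture designates as root of $\tau'_{i-1}$ becomes the root of $T_{i-1}$.

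If $i\in\tau_{i-1}$, GreedyFuture's rule for the anticipated search $s_i=i$ immediately places $i$ at the root of $\tau'_{i-1}$, which handles the first alternative. The interesting case is $i\notin\tau_{i-1}$, where I would first argue that $i-1$ must be the predecessor of $i$ in $\tau_{i-1}$, since $i-1$ lies on its own search path and is the integer immediately below $i$. GreedyFuture then installs $i-1$ as the root of $\tau'_{i-1}$ and installs the successor $j>i$ of $i$ in $\tau_{i-1}$ as the root's right child.

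To locate $i$ itself inside $T_{i-2}$, I would invoke the standard BST fact that the in-order successor of $i-1$ is either a strict ancestor of $i-1$ or the leftmost node of $i-1$'s right subtree. Since $\tau_{i-1}$ is precisely the set of ancestors of $i-1$, the ancestor option would force $i\in\tau_{i-1}$, contradicting our case assumption. Hence $i$ is the leftmost node of $i-1$'s right subtree $R$ in $T_{i-2}$. All values in $R$ lie in the open interval $(i-1,j)$: the lower bound is immediate, and the upper bound follows because $j$ is an ancestor of $i-1$ on $\tau_{i-1}$ with $i-1$ (hence all of $R$) in $j$'s left subtree.

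The final step is to chase how GreedyFuture reattaches off-path subtrees. Because $\tau_{i-1}$ and $\tau'_{i-1}$ share the same node set and no node of $\tau_{i-1}$ has value strictly between $i-1$ and $j$ (by the predecessor/successor definitions), $j$ has no left child inside $\tau'_{i-1}$, so $R$, whose values lie in $(i-1,j)$, is reinserted exactly at $j$'s left-child slot in $T_{i-1}$. Tracing from the root of $T_{i-1}$ via its right child $j$, then left into $R$, then repeatedly left, reaches $i$ with no further left descendants, which gives the second alternative of the lemma. The only real care needed is the bookkeeping of off-path subtree reattachment; the rest is direct BST in-order reasoning and I do not expect any serious obstacle.
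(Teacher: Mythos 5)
Your proposal is correct and has the same skeleton as the paper's proof: split on whether $i$ lies on the search path $\tau_{i-1}$, and in the off-path case note that $i-1$, being on its own search path and the integer immediately below $i$, is the predecessor of $i$ within $\tau_{i-1}$, so \GreedyFuture makes $i-1$ the root of $T_{i-1}$. Where you diverge is the final step. The paper finishes in one line: once $i-1$ is the root of $T_{i-1}$, the leftmost node of the root's right subtree is the minimum key exceeding $i-1$, which is $i$ because the keys are exactly $1,\dots,n$; no tracking of off-path subtrees is needed. Your explicit chase --- locating $i$ as the leftmost node of $i-1$'s old right subtree $R$, bounding the keys of $R$ inside $(i-1,j)$, and arguing that $R$ is reattached at the empty left-child slot of $j$ --- is valid, but it is strictly more work, and it silently assumes that the successor $j$ of $i$ within $\tau_{i-1}$ exists; when every path element is at most $i-1$, \GreedyFuture gives the new root no right child inside $\tau'_{i-1}$ and $R$ hangs directly as the root's right subtree, a case your trace does not mention (it is handled analogously, and is subsumed automatically by the one-line order argument). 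In short: correct and essentially the paper's route, but the reattachment bookkeeping can be replaced by the observation that the leftmost node of the root's right subtree is the in-order successor of the root.
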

\begin{proof}
  If~$i$ was accessed during
  the~$i-1$st search, then~$i$ is the root of~$T_{i-1}$. Otherwise,~$i-1$
  is the predecessor node of~$i$
  on the search path. Therefore,~$i-1$ is the root of~$T_{i-1}$
  and~$i$ is the leftmost node of the root's right subtree.
\end{proof}

\begin{lemma}
  \label{lem:deep_access}
  Node~$x$ is accessed at most once in any position other than the root
  or the root's right child.
\end{lemma}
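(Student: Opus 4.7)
The plan is to track a fixed node $x$ through the sequence $T_0, T_1, \dots, T_n$ and bound the number of searches during which $x$ lies on a search path at depth at least $2$. By Lemma~\ref{lem:left-spine}, for every $i \geq 2$ the search path $\tau_i$ is either the singleton $\{i\}$ or the right-then-left spine $\{i-1, a_1, \dots, a_k, i\}$, whose nodes at depth $\geq 2$ are precisely $a_2, \dots, a_k, i$. A deep access of $x$ is therefore either \emph{target-deep} ($x = i$ with $x$ at depth $\geq 2$) or \emph{ancestor-deep} ($x = a_l$ for some $l \geq 2$, which forces $x > i$). Since $x$ can be a target only once, it suffices to show: if $x$ is ancestor-deep during some first search $i^*$, then $x$ is not deep-accessed at any later search.

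My approach is to unroll GreedyFuture's rearrangement at search $i^*$ explicitly. Based on where $s_{i^*+1}$ sits relative to $\tau_{i^*}$, GreedyFuture enters one of three sub-cases. The substantive case is $a_k > i^*+1$: the new root is $i^*$ with $a_k$ as its right child, and by unwinding the GreedyFuture recursion (which at each level picks the next future target in the subtree, or its in-$\tau$ successor, as the new subtree root) one verifies that $a_{k-1}, a_{k-2}, \dots, a_1$ form a right spine as the right subtree of $a_k$, with the off-path subtrees from $\tau_{i^*}$ hanging as left children of this spine. The analogous case $a_k = i^*+1$ gives the same spine structure with $a_k$ promoted to the root. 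In either configuration $x$ sits as a pure right descendant of $a_k$ in $T_{i^*}$.

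I would then maintain the following invariant by induction on $j$: for every $j$ with $i^* \leq j < a_k$, node $a_k$ is still the right child of the root of $T_j$, the right spine $\{a_{k-1}, \dots, a_1\}$ still hangs below $a_k$, and $x$ is still a pure right descendant of $a_k$. The invariant is preserved because each intervening search has a target strictly less than $a_k$, so the rearranged search path stays inside $a_k$'s left subtree and never disturbs the right spine; reattached off-path subtrees carry keys smaller than $a_k$ and likewise hang as left children. When the search for $a_k$ finally arrives, $a_k$ becomes the root and the same structural argument recurses into the spine, treating $a_{k-1}$ as the new ``$a_k$.'' Iterating, $x$ is exposed as the root or the root's right child precisely at its own search, which is therefore not target-deep; likewise $x$ is never on a search path strictly between $i^*$ and search $x$.

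The principal obstacle is the bookkeeping required to verify that the right-spine structure carrying $x$ survives the reattachment of off-path subtrees after each intervening rearrangement. One has to check that $a_k$'s right subtree always reattaches as $a_k$'s right subtree in the next tree, which follows from the BST invariant because $a_k$ remains the maximum key touched in each intervening $\tau_j$ (the targets in this range are all less than $a_k$), so the keys above $a_k$ have nowhere else to attach. Once the invariant is established, $x$ is deep-accessed only at search $i^*$, and Lemma~\ref{lem:deep_access} follows.
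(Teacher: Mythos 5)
Your reduction (target-deep happens at most once; it suffices to show that after the first ancestor-deep access at search $i^*$ node $x$ is never on a deep path position again) is fine, and your description of $T_{i^*}$ — new root $i^*$ (or $a_k=i^*+1$), right child $a_k$ (or $a_{k-1}$), the remaining path nodes strung out as an increasing right spine with $x$ on it and off-path subtrees reattached by key order — matches what the paper establishes for the rearrangement at the access. The gap is in the induction you run afterwards. The invariant ``$a_k$ is still the right child of the root of $T_j$ for all $i^*\le j<a_k$'' is false, and its justification contradicts your own use of Lemma~\ref{lem:left-spine}: for every intervening search $j$ whose target is not already the root, the search path is the root, then the root's right child — that is, $a_k$ itself — then the left spine of $a_k$'s left subtree down to $j$. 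So the path does \emph{not} stay inside $a_k$'s left subtree; $a_k$ is on it and is rearranged with it, and after the rearrangement the new root's right child is the least path node exceeding $j+1$, which is generally a node taken from $a_k$'s old left subtree, while $a_k$ is pushed to the bottom of the new right spine. For instance, with root $3$, right child $a_k=10$ whose left subtree is the left spine $9,7,5,4$, and $x=12$ hanging as $10$'s right child, the search for $4$ yields root $5$ with right child $7$; the node $10$ is now a strict right descendant of $7$. Hence the invariant already fails at the first intervening search, and the later steps inherit the problem: $a_k$ need not be the root's right child when search $a_k$ arrives, nor need it become the root when it is accessed, and $x$ may be exposed as the root's right child well before its own search.

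What does survive — and what is actually needed — is the weaker invariant that $x$ lies on the right spine of the root's right subtree, i.e.\ the path from the root's right child down to $x$ uses only right-child edges. This is preserved by every search that does not access $x$: the path nodes greater than the new root are rebuilt as an increasing right spine whose bottom element, the maximum path node, keeps its old right subtree (your observation that $a_k$ is the maximum key touched is exactly the fact needed, applied to the current maximum rather than to the fixed $a_k$), so the right-edge path to $x$ is merely re-threaded through the new spine. Since, by Lemma~\ref{lem:left-spine}, a deep access at a search $j\ge 2$ can only happen to a node in the left subtree of the root's right child, this corrected invariant is what rules out a second deep access; with that repair your argument essentially coincides with the paper's proof, which shows that a deep-accessed $x>i+1$ becomes the root's right child or a right descendant of it and can never be moved into the left subtree of the root's right child. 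As written, however, the key preservation step of your induction would fail.
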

\begin{proof}
  Consider node~$x$ and search~$i$. Node~$x$ cannot be accessed if~$x<i-1$
  according to Lemma~\ref{lem:left-spine}.
  If~$x$ lies on the search path and
  $x \leq i+1$
  then either~$x$ becomes the root or~$x$ moves into the
  root's left subtree so that~$i$ or~$i+1$
  can become the root.

  Now suppose~$x$ lies along the search path and $x>i+1$. Let~$x_{\ell}$ be the
  least node strictly smaller than~$x$ that does not become the root.
  If~$x_{\ell}$ does not exist, then~$x$ becomes the root's right child
  as either~$x$ is the successor of~$i+1$ on the search path, node~$i+1$ is
  on the search path and~$x=i+2$, or node~$i+1$ is on the search path
  and~$x$ is the successor of~$i+2$ on the search path.
  If~$x_{\ell}$ does exist, then~$x_{\ell}$ becomes the root's right child
  for one of the reasons listed above
  and~$x$ becomes a right descendent of~$x_{\ell}$.

  Node~$x$ cannot be moved to the left subtree of the root's right child
  in all the cases above.
  Lemma~\ref{lem:left-spine} therefore implies~$x$
  is accessed in the root's left subtree on the first search,~$x$
  is accessed \emph{once} in the left subtree of the root's right child, or~$x$
  is never accessed anywhere other than as the root or root's right child.
\end{proof}

We now conclude the proof of Theorem~\ref{thm:sequential}.
\begin{proof}
  The cost of the first search is at most~$n$. The costs of all subsequent
  searches
  is at most~$2(n-1) + n$ according to Lemma~\ref{lem:deep_access};
  at most~$2(n-1)$ node accesses occur at the root or root's right
  child, and at most~$n$ nodes are accessed exactly once in a position other
  than the root or the root's right child.
  The total cost of all searches is at most~$4n-2$.
\end{proof}

\section{Closing Remarks}

The ultimate
goal of this line of research is to prove \GreedyFuture or splay trees
optimal, but showing other upper bounds may prove interesting. In
particular, it would be interesting to see if some difficult to prove
splay tree properties such as the dynamic finger bound have concise proofs when
applied to \GreedyFuture. Another direction is to explore how
\GreedyFuture may be modified to support insertions and deletions while
still maintaining its small search cost.\\

\noindent
\textbf{Acknowledgements}
The author would like to thank Alina Ene, Jeff Erickson, Benjamin Moseley,
and Benjamin Raichel for their advice and helpful discussions as well as
the anonymous reviewers for their suggestions on improving this report.

This research is supported in part by the Department of Energy Office of Science
Graduate Fellowship Program (DOE SCGF), made possible in part by the
American Recovery and Reinvestment Act of 2009, administered by ORISE-ORAU under
contract no. DE-AC05-06OR23100.

\bibliographystyle{splncs}
\bibliography{greedyfuture}
\end{document}